\newtheorem{assumption}{Assumption}
\newtheorem{theorem}{Theorem}
\newtheorem{lemma}{Lemma}
\newtheorem{proposition}{Proposition}
\newtheorem{definition}{Definition}
\newtheorem{remark}{Remark}
\newcommand{\NN}{\mathbb{N}}
\newcommand{\RR}{\mathbb{R}}
\newcommand{\cJ}{\mathcal{J}}
\newcommand{\cP}{\mathcal{P}}
\newcommand{\cT}{\mathcal{T}}
\newcommand{\cV}{\mathcal{V}}
\newcommand{\cW}{\mathcal{W}}
\newcommand{\pe}{_\text{pe}}
\newcommand{\pr}{_{\perp}}
\newcommand{\x}{\times}
\newcommand{\T}{^{\intercal}}
\newcommand{\0}{_{\circ}}
\newcommand{\al}{\alpha}
\newcommand{\psih}{\hat{\psi}}
\newcommand{\wt}{\tilde{w}}
\newcommand\tqed{\leavevmode\unskip\penalty9999 \hbox{}\nobreak\hfill\quad\hbox{$\triangleleft$}}
\newcommand{\dqed}{\leavevmode\unskip\penalty9999 \hbox{}\nobreak\hfill\quad\hbox{$\diamond$}}
\newcommand{\rqed}{\leavevmode\unskip\penalty9999 \hbox{}\nobreak\hfill\quad\hbox{$\triangleright$}}
\DeclareMathOperator{\Img}{Im}
\DeclareMathOperator{\Ker}{Ker}
\title{On Regular Regressors in Adaptive Control}
\author{
    Erick {Mejia Uzeda},
    \IEEEmembership{Graduate Student Member, IEEE},
    and 
    Mireille E. Broucke,
    \IEEEmembership{Member, IEEE}
    \thanks{This work was supported by the Natural Sciences and Engineering Research Council of Canada (NSERC).}
    \thanks{The authors are with the Department of Electrical and Computer Engineering, University of Toronto, Toronto, ON M5S 3G4, Canada (e-mail: erick.mejiauzeda@mail.utoronto.ca; broucke@control.utoronto.ca).}
}   
\begin{document}

\maketitle


\pagestyle{empty}
\thispagestyle{empty}

\begin{abstract}
This paper addresses a shortcoming in adaptive control, that the property of a regressor being persistently exciting (PE) is not well-behaved. One can construct regressors that upend the commonsense notion that excitation should not be created out of nothing. To amend the situation, a notion of regularity of regressors is needed.
We are naturally led to a broad class of regular regressors that enjoy the property that their excitation is always confined to a subspace, a foundational result called the PE decomposition. A geometric characterization of regressor excitation opens up new avenues for adaptive control, as we demonstrate by formulating a number of new adaptive control problems.
\end{abstract}

\begin{IEEEkeywords}
    Adaptive control,
    adaptive systems,
    identification for control.
\end{IEEEkeywords}

\section{Introduction}

\IEEEPARstart{A}{} regressor $w(t) \in \RR^q$ is a signal used to reconstruct other unknown signals $d(t) = w\T(t) \psi \in \RR$ or to identify unknown constant parameters $\psi \in \RR^q$. 
In adaptive control the goal is to design an adaptive law for $\psih(t) \in \RR^q$ so that an error, such as $e := w\T \psih - d$, vanishes asymptotically.
Convergence of $\psih$ to $\psi$ is known to be attainable when a condition called {\em persistent excitation} (PE) is satisfied by the regressor.

\begin{definition}
    \label{def:PE}
    We say $w(t) \in \RR^q$ is {\em persistently exciting} (PE) if there exist constants $\beta\pe,\, T\pe > 0$ such that
    \begin{align} \label{eq:PE}
        \frac{1}{T\pe} \int_{t}^{t + T\pe} w(\tau) w\T(\tau) \,d\tau
        \succeq
        \beta\pe I
    \end{align}
    for all $t \geq 0$.
    \rqed
\end{definition}

Historically, definitions for PE were developed to ensure global uniform asymptotic stability (GUAS) of systems commonly arising in adaptive control \cite{NARENDRA89, PANTELEY01}. 
Other notions of PE, relevant to the behavioural approach to control, can be found in \cite{WILLEMS05, NESIC15, TESI20, MARKOVSKY21} and are not the focus of this paper.
In \cite{NARENDRA89} the focus is on the classic adaptive law
\begin{align*}
    \dot{\psih}
    &=
    - \Gamma w(t) e
\end{align*}
where $\Gamma \succ 0$ and $e(t) \in \RR$ is the error produced from an error model.
Definition~\ref{def:PE} is then shown to be a necessary and sufficient condition for GUAS of the {\em static error model}
\begin{align*}
    e &= w\T(t) (\psih - \psi)
\end{align*}
but not for the {\em dynamic error model}
\begin{align*}
    \dot{x} &= A x + B w\T(t) (\psih - \psi)
    \\
    e &= B\T P x
    \,.
\end{align*}
Instead, a suitable definition of PE in the dynamic error model case is found in \cite[p.~247]{NARENDRA89}.
However, if one adopts the class of regressors proposed in \cite{YUAN77}, denoted $\cP_{[0, \infty)}$, that are piecewise continuous with bounded derivative, then the two definitions of PE are equivalent \cite[Corollary~2.3]{NARENDRA89}. The requirement that a regressor belongs to $\cP_{[0, \infty)}$ is interpreted as a {\em regularity condition}, which has been widely adopted in the literature \cite{NARENDRA89,YUAN77,SASTRY89,IOANNOU12}.

A significant contribution to the understanding of the PE condition was provided by \cite{NARENDRA87}. It introduced a notion of PE in a subspace, determined how PE is modified under various transformations, and made connections to robustness. The spirit of the present paper is to further the program of \cite{NARENDRA87}. 
Specifically, we identify an important shortcoming in adaptive control, that the PE condition is not well-behaved, even for regressors in $\cP_{[0, \infty)}$. This demonstrates that classical regularity conditions need to be revisited.

We start from the characterization of PE that requires regressors to be PE along any direction in the parameter space. 

\begin{proposition}[{\cite[Lemma~2]{NARENDRA87}}]
    \label{prop:PE:PE_al}
    Suppose $w(t) \in \RR^q$ is bounded.
    Then $w$ is PE if and only if $\al\T w$ is PE for all non-zero $\al \in \RR^q$.
    \dqed
\end{proposition}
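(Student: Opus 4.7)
The plan is to prove the two directions separately, with the reverse implication being the substantive one.

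For the forward direction, I would simply right- and left-multiply the PE inequality by any unit $\al \in \RR^q$ to obtain $\tfrac{1}{T\pe}\int_t^{t+T\pe}(\al\T w(\tau))^2\,d\tau \geq \beta\pe$, which is exactly the PE condition for the scalar signal $\al\T w$. Boundedness plays no role here.

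For the reverse direction, the central issue is that the hypothesis supplies pointwise constants $(\beta_\al, T_\al)$ that may degenerate as $\al$ varies over the unit sphere $S^{q-1}$; one must promote this into uniform constants. I would argue by contradiction and use compactness of $S^{q-1}$ together with boundedness of $w$ to control the dependence on $\al$. First, I would record two auxiliary facts. (i) If $\al\T w$ satisfies \eqref{eq:PE} with $(\beta_\al, T_\al)$, then by a routine window-partition estimate, for every $T \geq 2 T_\al$ one has $\tfrac{1}{T}\int_t^{t+T}(\al\T w(\tau))^2\,d\tau \geq \beta_\al/2$ for all $t \geq 0$. (ii) Writing $R_T(t) := \tfrac{1}{T}\int_t^{t+T} w(\tau)w\T(\tau)\,d\tau$, the bound $\|w\|\leq M$ gives $\|R_T(t)\| \leq M^2$ uniformly in $T,t$, hence for any $\al,\al'\in S^{q-1}$,
\begin{align*}
\bigl|\al\T R_T(t)\al - (\al')\T R_T(t)\al'\bigr| \leq 2 M^2 \|\al - \al'\|.
\end{align*}

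Now suppose $w$ were not PE. Then for every integer $N \geq 1$ there exist $t_N \geq 0$ and a unit vector $\al_N$ with $\al_N\T R_N(t_N)\al_N < 1/N$. By compactness of $S^{q-1}$, pass to a subsequence so that $\al_N \to \al^\star \in S^{q-1}$. Applying the hypothesis to $\al^\star$ yields constants $(\beta^\star, T^\star)$, and fact (i) then gives $(\al^\star)\T R_N(t_N)\al^\star \geq \beta^\star/2$ for all $N \geq 2 T^\star$. Combining with fact (ii) produces
\begin{align*}
\beta^\star/2 \leq (\al^\star)\T R_N(t_N)\al^\star \leq \al_N\T R_N(t_N)\al_N + 2 M^2 \|\al_N - \al^\star\| < \tfrac{1}{N} + 2M^2\|\al_N - \al^\star\|,
\end{align*}
and letting $N \to \infty$ yields $\beta^\star/2 \leq 0$, the desired contradiction.

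The main obstacle is the passage from pointwise to uniform PE constants. The window-partition estimate in (i) is essential because it lets one evaluate the quadratic form at an $\al$-independent window length $N$, and the Lipschitz bound (ii), which is exactly where boundedness of $w$ enters, is what allows the perturbation from $\al_N$ to $\al^\star$ to be absorbed. Without either ingredient the compactness argument breaks down, which is the precise reason boundedness appears in the hypothesis.
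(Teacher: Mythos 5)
Your proof is correct. Note that the paper does not prove this statement at all: it is quoted directly from the cited reference (Lemma~2 of Narendra--Annaswamy, 1987), so there is no in-paper argument to compare against; your write-up supplies a self-contained proof, and it follows what is essentially the standard route for this result. The forward direction is indeed immediate, and in the converse your two ingredients are exactly the right ones: the window-partition estimate lets you replace the $\al$-dependent window $T_\al$ by the common window length $N$ (in fact $T \geq T_\al$ already gives the factor $\tfrac{m}{m+1} \geq \tfrac12$, so $2T_\al$ is slightly generous), and the bound $\lvert \al\T R_T(t)\al - (\al')\T R_T(t)\al' \rvert \leq 2M^2\lVert \al - \al'\rVert$ is precisely where boundedness of $w$ is used, which correctly explains why that hypothesis appears; dropping boundedness is exactly how uniformity over the sphere can fail. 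The compactness-of-$S^{q-1}$ contradiction then closes the argument with no gaps.
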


This characterization of PE is important because it is formulated in the context in which regressors are used in adaptive control: as an inner product between the regressor and the unknown parameter vector. It also provides insight on where in the parameter space unknown parameters can be recovered. Proposition~\ref{prop:PE:PE_al} is the foundation from which we carry out a deeper study of the PE condition.

Our contributions are as follows.
In Section~\ref{sec:reg} we elaborate on what it means for the PE condition to be well-behaved, defining a new notion of regularity involving the so-called non-PE set. A pathological regressor inspired from \cite{NARENDRA89} is shown to be non-regular.
In Section~\ref{sec:PEdecomp} we present the PE decomposition, shedding light on the geometric nature of the PE condition. It shows that a notion of PE in a subspace \cite{SADEGH_HOROWITZ90, LI_KRSTIC98, MARINOTOMEI23, GOEL20, GUO23, ARANOVSKI23} does not need to be explicitly imposed, but rather it is an intrinsic property of regular regressors.
Our geometric approach to PE is different from the approach in \cite{PADOAN17}, which focuses on geometric properties of dynamical systems.
In Section~\ref{sec:conds} we reconcile regularity with the the notion that regressor components that are not PE should be vanishing.
Finally, new problems in adaptive control are introduced in Section~\ref{sec:probs}.

\section{Regularity of Regressors}
\label{sec:reg}

Classical regularity conditions focus on continuity and differentiability properties of a regressor \cite{YUAN77}.
We will see that such conditions do not capture the key property required for the PE condition to be well-behaved. As a result, a new and carefully crafted notion of regularity is needed.

\subsection{Regularity and the Non-PE Set}
\label{sec:PE:reg:nonPE}

To illustrate what we want from a regressor so that the PE condition is well-behaved, suppose that we have a regressor $w(t) \in \RR^4$ where the components $w_1,\, w_2$ are each PE and the components $w_3,\, w_4$ are each not PE. Let $\al_i \neq 0$.
There are four {\em reasonable} behaviours we expect to see when taking linear combinations of scalar components:
\begin{enumerate}[label=\alph*)]
    \item \label{itm:PE:reg:a}
    $\al_1 w_1 + \al_2 w_2$ may be PE;
    \item \label{itm:PE:reg:b}
    $\al_1 w_1 + \al_2 w_2$ may be not PE;
    \item \label{itm:PE:reg:c}
    $\al_2 w_2 + \al_4 w_4$ is PE;
    \item \label{itm:PE:reg:d}
    $\al_3 w_3 + \al_4 w_4$ is not PE.
\end{enumerate}
Behaviour \ref{itm:PE:reg:a} captures the notion of linear independence of PE components of a regressor.
Behaviour \ref{itm:PE:reg:b} instead captures the notion of linear dependence of PE components of a regressor.
One of these two behaviours always holds true. 
Behaviours \ref{itm:PE:reg:c} and \ref{itm:PE:reg:d} state that the addition of a scalar component that is not PE does not alter the excitation properties of other components. 
These observations provide a roadmap for a new regularity condition. First, we define a notion complementing the characterization of PE in Proposition~\ref{prop:PE:PE_al}.

\begin{definition}
    \label{def:PE:nonPE}
    We say $w(t) \in \RR^q$ is {\em non-PE} if for all $\al \in \RR^q$ we have $\al\T w$ is not PE.
    \rqed
\end{definition}

Notice that a scalar regressor $w(t) \in \RR$ is not PE if and only if it is non-PE.
Now to address if behaviours \ref{itm:PE:reg:c} and \ref{itm:PE:reg:d} can be achieved, we define the {\em non-PE set of $w(t) \in \RR^q$} as
\begin{align} \label{eq:PE:nonPEset}
    \cW^{\star} 
    :=
    \{\, \al \in \RR^q ~:~ \al\T w \text{ is not PE} \,\}
    \,.
\end{align}
From Definition~\ref{def:PE}, this set is closed under scalar multiplication. Therefore, $\cW^{\star}$ is at the very least a {\em pencil} of lines. For behaviours \ref{itm:PE:reg:c} and \ref{itm:PE:reg:d} to hold true, $\cW^{\star}$ should also be closed under vector addition; that is, $\cW^{\star}$ should be a subspace.
If behaviours \ref{itm:PE:reg:c} and \ref{itm:PE:reg:d} do not hold true, then it would be possible to create excitation out of signals without excitation, demonstrating that the PE condition is not well-behaved.
We arrive at a notion of a regular regressor, cast in terms of geometric structure, characterizing a well-behaved regressor.

\begin{definition}
    \label{def:PE:reg}
    We say $w(t) \in \RR^q$ is a {\em regular regressor} if it is bounded and its non-PE set is a subspace.
    We call $\cW^{\star}$ the {\em non-PE subspace of $w$}.
    \rqed
\end{definition}

Notice this notion of regularity does not at the outset impose continuity nor differentiability (even if such assumptions are useful to consolidate PE definitions). In that sense, we move beyond classical notions of regularity. For example, a regressor $w(t) \in \RR^q$ whose components are each the Dirichlet function is nowhere continuous yet satisfies Definition~\ref{def:PE:reg} since its non-PE set is $\RR^q$ by the fact that it is zero almost everywhere. On the other hand, some form of continuity of regressors is a natural assumption that we will revisit as we further characterize regular regressors. 

\subsection{Pathological Regressors}
\label{sec:PE:reg:pathology}

This section provides an example capturing the essence of the pathology that leads to a violation of the behaviours \ref{itm:PE:reg:c} or \ref{itm:PE:reg:d}. This pathology is known in the literature \cite[p. 239]{NARENDRA89} but, to our knowledge, was not further investigated. We firstly point out that any scalar regressor is always a regular regressor. This is because the non-PE set is closed under scalar multiplication and any subspace of $\RR$ is either $\{\, 0 \,\}$ or $\RR$.
We construct a non-regular regressor $w(t) \in \RR^2$ whose components $w_i$ are not PE, but their sum $\al\T w = w_1 + w_2$ is PE. The key idea is found in the following result.

\begin{proposition} \label{prop:PE:pathology}
    Suppose $v(t) \in \RR^q$ is PE.
    Let $\gamma(t) \in \RR$ be constructed as follows: partition $[0, \infty)$ into intervals of length $2^k$ for $k \in \NN$, set $\gamma(t) = 1$ for every $k$-th interval where $k$ is odd, and set $\gamma(t) = 0$ elsewhere.
    Then $w_1 := \gamma v$ and $w_2 := (1 - \gamma) v$ are each non-PE.
    \dqed
\end{proposition}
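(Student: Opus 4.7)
My plan is to exploit the fact that the switching function $\gamma$ has zero-intervals of unbounded length, and so does $1 - \gamma$. Specifically, by construction the blocks indexed by even $k$ are the zero-intervals of $\gamma$ with lengths $2^0, 2^2, 2^4, \ldots$, while the blocks indexed by odd $k$ are the zero-intervals of $1-\gamma$ with lengths $2^1, 2^3, 2^5, \ldots$. Both sequences diverge to infinity. Consequently, $w_1 = \gamma v$ is identically zero on arbitrarily long subintervals of $[0,\infty)$, and likewise for $w_2 = (1-\gamma) v$.

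To show $w_1$ is non-PE, I fix an arbitrary $\al \in \RR^q$ and examine the scalar signal $\al\T w_1 = \gamma \, \al\T v$. By Definition~\ref{def:PE}, it suffices to rule out the existence of constants $\beta\pe, T\pe > 0$ for which the PE average is bounded below by $\beta\pe$ uniformly in $t$. Given any $T\pe > 0$, I would pick an even integer $k$ with $2^k \geq T\pe$, let $[a, a+2^k)$ be the corresponding zero-block of $\gamma$, and set $t = a$. Then $[t, t+T\pe] \subseteq [a, a+2^k)$, so $\al\T w_1 \equiv 0$ on this window and the PE integral vanishes; the PE inequality is thereby violated for every candidate $\beta\pe > 0$. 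Since $\al$ was arbitrary, Definition~\ref{def:PE:nonPE} gives that $w_1$ is non-PE. The verbatim argument with odd $k$ handles $w_2$.

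There is no real analytical obstacle; in fact, the PE hypothesis on $v$ plays no role in showing each $w_i$ is non-PE. The subtlety is not in the proof but in what the proposition is set up to achieve: juxtaposed with the fact that $w_1 + w_2 = v$ remains PE by hypothesis, it exhibits a stacked regressor $(w_1\T, w_2\T)\T$ whose non-PE set contains the coordinate directions but not their sum, hence fails to be a subspace. This is precisely the failure of behaviours \ref{itm:PE:reg:c}--\ref{itm:PE:reg:d} that the regularity notion of Definition~\ref{def:PE:reg} is designed to exclude.
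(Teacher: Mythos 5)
Your proposal is correct and follows essentially the same route as the paper's proof: for any candidate window length $T\pe$ you place the averaging window inside a sufficiently long zero-block of $\gamma$ (respectively of $1-\gamma$), so the PE integral vanishes there and no $\beta\pe>0$ can work, uniformly in $\al$. The only difference is cosmetic (you index the block by its length $2^k$ and start the window at its left endpoint, while the paper writes that endpoint explicitly as $t_0=\sum_{k=1}^{N-1}2^k$), and your closing remark that the PE hypothesis on $v$ is not used in this step matches the paper, where that hypothesis only serves the subsequent non-regularity discussion.
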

\begin{proof}
    To show that $w_1$ is non-PE, we need to show that $\al\T w_1$ is not PE for all $\al \in \RR^q$.
    Fix any $\beta,\, T > 0$. Let $N \in \NN$ be such that $T \leq 2^{N}$ and $N$ is even. Define $t_0 := \sum_{k = 1}^{N - 1} 2^k$ and $t_1 := t_0 + 2^N$. Then we have that $\gamma(t) = 0$ for all $t \in (t_0, t_1)$ by construction of $\gamma$, and so
    \begin{align*}
        \frac{1}{T} \int_{t_0}^{t_0 + T} w_1(\tau) w_1\T(\tau) \,d\tau
        &=
        \frac{1}{T} \int_{t_0}^{t_0 + T} \gamma^2(\tau) v(\tau) v\T(\tau) \,d\tau
        \\&=
        0
    \end{align*}
    since $t_0 + T \leq t_0 + 2^N = t_1$. As a result, for any $\al$ we have
    \begin{align*}
        \frac{1}{T} \int_{t_0}^{t_0 + T} \al\T w_1(\tau) w_1\T(\tau) \al \,d\tau
        =
        0
        <
        \beta \cdot 1
        \,.
    \end{align*}
    Since the choice of $\beta,\, T > 0$ was arbitrary, it cannot be that $\al\T w_1$ is PE, and so the scalar regressor $\al\T w_1$ is not PE.
    Since $\al$ was also arbitrary, this means that $w_1$ is non-PE.

    The proof that $w_2$ is non-PE is similar, hinging on the fact that $1 - \gamma$ looks like $\gamma$ except the intervals where $1 - \gamma$ are $0$ versus $1$ are swapped. As such, the only modification to the proof is to select $N$ odd.
\end{proof}

Proposition~\ref{prop:PE:pathology} shows the existence of regressors violating behaviours \ref{itm:PE:reg:c} and \ref{itm:PE:reg:d}. Particularly, 
$v - w_2 = w_1$ violates behaviour \ref{itm:PE:reg:c} while
$w_1 + w_2 = v$ violates behaviour \ref{itm:PE:reg:d}.
From here we can construct a simple but evocative regressor that is not regular. Let $v(t) = 1$ and define $w(t) := (w_1, w_2)(t) = (\gamma, 1 - \gamma)(t) \in \RR^2$.
Since both $w_1$ and $w_2$ are not PE, the non-PE set of $w$ contains  $(\RR \x \{\, 0 \,\}) \cup (\{\, 0 \,\} \x \RR)$.
Let $\al = (\al_1, \al_2) \in \RR^2$ with $\al_i \neq 0$; that is, $\al \not\in (\RR \x \{\, 0 \,\}) \cup (\{\, 0 \,\} \x \RR)$. Then $| \al\T w | \geq \min\{\, | \al_1 |, | \al_2 | \,\}$ because $\al\T w = \al_1 w_1 + \al_2 w_2$ alternates between $\al_1$ and $\al_2$ every interval of length $2^k$. As a result, $\al\T w$ is PE,  so $\al \not\in \cW^{\star}$.
Hence $\cW^{\star} = (\RR \x \{\, 0 \,\}) \cup (\{\, 0 \,\} \x \RR)$ is a cross made up of the Cartesian axes, meaning it is not a subspace. Therefore $w$ is not a regular regressor. We additionally point out that since the components of $w$ are pulse trains, $w \in \cP_{[0, \infty)}$ and its derivative is zero almost everywhere. This example therefore highlights the shortcomings of classical regularity conditions focused on continuity properties.

\subsection{Properties of Regular Regressors}

This section shows that regular regressors enjoy a number of properties useful in analysis. The proofs themselves demonstrate the ease of working with a geometric notion of regular regressors.
We first present a property about non-PE regressors that is independent of regularity.

\begin{lemma} \label{lem:noPE:L}
    Consider the matrix $L \in \RR^{p \x q}$.
    If $w(t) \in \RR^q$ is non-PE then $L w$ is non-PE.
    \dqed
\end{lemma}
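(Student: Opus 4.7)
The plan is to unfold Definition~\ref{def:PE:nonPE} directly and reduce the claim about $Lw$ to the hypothesis on $w$ via the adjoint of $L$. To show that $Lw \in \RR^p$ is non-PE, I would fix an arbitrary $\beta \in \RR^p$ and show that the scalar signal $\beta\T (Lw)$ is not PE.

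The key observation is the trivial identity $\beta\T L w(t) = (L\T \beta)\T w(t)$. Setting $\al := L\T \beta \in \RR^q$, this rewrites the scalar signal $\beta\T L w$ as $\al\T w$ for a particular $\al \in \RR^q$. Since $w$ is non-PE by hypothesis, Definition~\ref{def:PE:nonPE} guarantees that $\al\T w$ is not PE for every $\al \in \RR^q$, and in particular for this choice of $\al$. Hence $\beta\T Lw$ is not PE; because $\beta$ was arbitrary, $Lw$ is non-PE.

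Honestly, there is no real obstacle here. The statement is a one-line consequence of the fact that the set of linear functionals $\beta\T L(\cdot)$ applied to $w$ is contained in the set of linear functionals $\al\T(\cdot)$ applied to $w$ (specifically, $\{L\T \beta : \beta \in \RR^p\} \subseteq \RR^q$). The only thing worth noting is that no assumption on $L$ — such as full rank, squareness, or boundedness — is needed; the implication is purely algebraic and follows without regard to whether $L$ annihilates directions or mixes them. This also foreshadows why a subspace structure on the non-PE set is the right geometric notion: the non-PE property is already preserved under arbitrary linear maps at the level of signals.
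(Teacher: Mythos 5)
Your proof is correct and is essentially identical to the paper's: both fix an arbitrary vector in $\RR^p$, rewrite $\beta\T L w$ as $(L\T \beta)\T w$, and invoke the non-PE hypothesis on $w$ via Definition~\ref{def:PE:nonPE}. Nothing is missing, and your remark that no rank or boundedness assumption on $L$ is needed matches the paper's statement.
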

\begin{proof}
    Pick any $\al \in \RR^p$ and notice $L\T \al \in \RR^q$. Since $w$ is non-PE, we have that $(L\T \al)\T w = \al\T L w$ is not PE. Given that $\al$ was arbitrary, this means $L w$ is non-PE.
\end{proof}

Next we show regularity is preserved under linear maps.

\begin{lemma} \label{lem:PE:reg:component}
    Suppose $w(t) \in \RR^q$ is a regular regressor.
    Let $L \in \RR^{p \x q}$ be a matrix.
    Then $L w$ is a regular regressor.
    \dqed
\end{lemma}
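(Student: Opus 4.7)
The plan is to verify the two defining conditions of a regular regressor from Definition~\ref{def:PE:reg} in turn: boundedness of $Lw$, and that the non-PE set of $Lw$ is a subspace of $\RR^p$.

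Boundedness is immediate: since $w$ is bounded and $L$ is a constant matrix, $\|Lw(t)\| \leq \|L\| \|w(t)\|$ is bounded on $[0, \infty)$. So this step is just a one-liner and not the crux.

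The main step is characterizing the non-PE set of $Lw$. The key observation is the reinterpretation $\al\T (Lw) = (L\T \al)\T w$ for any $\al \in \RR^p$. This means that $\al$ belongs to the non-PE set of $Lw$ if and only if $L\T \al$ belongs to the non-PE set $\cW^{\star}$ of $w$. In other words, the non-PE set of $Lw$ equals the preimage $(L\T)^{-1}(\cW^{\star})$. Since $\cW^{\star}$ is a subspace by assumption (regularity of $w$) and the preimage of a subspace under a linear map is again a subspace, we conclude that the non-PE set of $Lw$ is a subspace of $\RR^p$, completing the proof.

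There isn't really a hard step here; the whole argument hinges on the transpose identity $\al\T L w = (L\T \al)\T w$, which reduces the problem to the definition of $\cW^{\star}$. If anything needs care, it is stating the preimage conclusion cleanly rather than doing a direct calculation with two elements $\al_1, \al_2 \in (Lw)^{\star}$ and a scalar $c$ to verify closure under addition and scalar multiplication by hand — either route works, but the preimage formulation makes transparent the geometric point the authors are emphasizing throughout the paper: regularity is preserved by linear maps because subspaces are preserved by preimages.
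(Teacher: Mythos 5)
Your proof is correct and follows essentially the same route as the paper: both hinge on the identity $\al\T L w = (L\T \al)\T w$, with your preimage formulation $(L\T)^{-1}(\cW^{\star})$ simply packaging the paper's element-by-element closure check under addition. Your explicit note on boundedness of $Lw$ is a small completeness point the paper's proof leaves implicit, but it does not change the argument.
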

\begin{proof}
    Let $\cW^{\star}$ denote the non-PE subspace of $w$ and let $\cW_L^{\star}$ denote the non-PE set of $L w$. Since the non-PE set is always closed under scalar multiplication, it suffices to show that it is closed under addition.
    Let $\al_1,\, \al_2 \in \cW_L^{\star}$. Then $\al_i\T L w = (L\T \al_i)\T w$ is not PE and so $L\T \al_i \in \cW^{\star}$. Because $\cW^{\star}$ is a subspace, we have that $L\T \al_1 + L\T \al_2 \in \cW^{\star}$. But then $(L\T \al_1 + L\T \al_2)\T w = (\al_1 + \al_2)\T L w$ is not PE, meaning that $\al_1 + \al_2 \in \cW_L^{\star}$.
\end{proof}

Finally, we present some facts about combinations of regressor components, generalizing \cite[Lemma~6.5~(ii)]{NARENDRA89}.  Theorem~\ref{thm:PE:stack:props} shows that regular regressors are well-behaved according to the behaviours \ref{itm:PE:reg:a}-\ref{itm:PE:reg:d} of Section~\ref{sec:PE:reg:nonPE}.

\begin{theorem} \label{thm:PE:stack:props}
    Suppose $w(t) := (w_1, w_2)(t) \in \RR^{q_1} \x \RR^{q_2}$ is a regular regressor and $w_1$ is non-PE.
    Then:
    \begin{enumerate}[label=\alph*)]
        \item \label{itm:PE:stack:props:a}
        if $w_2$ is non-PE then $w$ is non-PE;
        \item \label{itm:PE:stack:props:b}
        if $q_1 = q_2$ and $w_2$ is non-PE then $w_1 + w_2$ is non-PE;
        \item \label{itm:PE:stack:props:c}
        if $q_1 = q_2$ and $w_2$ is PE then $w_1 + w_2$ is PE.
        \dqed
    \end{enumerate}
\end{theorem}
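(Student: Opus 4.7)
The plan is to exploit the fact that the non-PE set $\cW^{\star}$ of the stacked regressor $w = (w_1, w_2)$ is a subspace of $\RR^{q_1+q_2}$, together with Lemma~\ref{lem:noPE:L} and Proposition~\ref{prop:PE:PE_al}. Before splitting into cases I would record the basic observation that, because $w_1$ is non-PE, for every $\al_1 \in \RR^{q_1}$ the vector $(\al_1, 0)$ lies in $\cW^{\star}$, since $(\al_1, 0)\T w = \al_1\T w_1$ is not PE. Hence $\RR^{q_1} \x \{0\} \subseteq \cW^{\star}$.

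For part~\ref{itm:PE:stack:props:a}, the same reasoning applied to $w_2$ gives $\{0\} \x \RR^{q_2} \subseteq \cW^{\star}$. For an arbitrary $\al = (\al_1, \al_2)$ I would write $\al = (\al_1, 0) + (0, \al_2)$ and invoke closure of $\cW^{\star}$ under addition to conclude $\al \in \cW^{\star}$, so $w$ is non-PE by Definition~\ref{def:PE:nonPE}.

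For part~\ref{itm:PE:stack:props:b}, I would note that $w_1 + w_2 = L w$ with $L = [I \ \ I] \in \RR^{q_1 \x (q_1+q_2)}$. Since part~\ref{itm:PE:stack:props:a} gives $w$ non-PE, Lemma~\ref{lem:noPE:L} immediately yields that $Lw = w_1 + w_2$ is non-PE.

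Part~\ref{itm:PE:stack:props:c} is the main obstacle and I would handle it by contradiction. Suppose $w_1 + w_2$ is not PE. Since $w$ is regular it is bounded, hence so is $w_1 + w_2$, and Proposition~\ref{prop:PE:PE_al} produces a non-zero $\al \in \RR^{q_1}$ (here I use $q_1 = q_2$) such that $\al\T(w_1 + w_2)$ is not PE. Writing this scalar as $(\al, \al)\T w$, we conclude $(\al, \al) \in \cW^{\star}$. By the preliminary observation, $(\al, 0) \in \cW^{\star}$ as well, so closure under subtraction gives $(0, \al) = (\al, \al) - (\al, 0) \in \cW^{\star}$. But then $\al\T w_2 = (0, \al)\T w$ is not PE with $\al \neq 0$, contradicting the assumption that $w_2$ is PE (again through Proposition~\ref{prop:PE:PE_al}, using boundedness of $w_2$). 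The subtle points to watch are making sure $q_1 = q_2$ is used precisely to form $(\al, \al)$, and that Proposition~\ref{prop:PE:PE_al} is applicable in both directions, which requires verifying boundedness of $w_1 + w_2$ and of $w_2$ as consequences of $w$ being a regular (hence bounded) regressor.
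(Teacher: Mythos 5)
Your proof is correct, and parts \ref{itm:PE:stack:props:a} and \ref{itm:PE:stack:props:b} coincide with the paper's argument (embed $(\al_1,0)$ and $(0,\al_2)$ in $\cW^{\star}$, use closure under addition, then hit $w$ with $L = [I \ \ I]$ via Lemma~\ref{lem:noPE:L}). For part \ref{itm:PE:stack:props:c} you take a genuinely more direct route: you stay inside the non-PE subspace of the original stacked regressor, observing that $\al\T(w_1+w_2) = (\al,\al)\T w$ places $(\al,\al)$ in $\cW^{\star}$, that $(\al,0) \in \cW^{\star}$ because $w_1$ is non-PE, and that closure under subtraction then forces $(0,\al) \in \cW^{\star}$, i.e.\ $\al\T w_2$ not PE, contradicting Proposition~\ref{prop:PE:PE_al}. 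The paper instead builds the auxiliary two-dimensional regressor $(\al\T(w_1+w_2),\, -\al\T w_1)$, invokes Lemma~\ref{lem:PE:reg:component} to certify it is regular, and then applies part \ref{itm:PE:stack:props:b} to it. Your version is shorter and needs only the subspace property of $\cW^{\star}$ for $w$ itself, whereas the paper's version showcases how its machinery (regularity preserved under linear maps, plus the already-proved items) composes; logically the two are equivalent, and you correctly attend to the one technical prerequisite both share, namely boundedness of $w_1+w_2$ and $w_2$ (inherited from regularity of $w$) so that Proposition~\ref{prop:PE:PE_al} applies in both directions.
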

\begin{proof}
    Throughout we have that $w_1$ is non-PE.
    
    \ref{itm:PE:stack:props:a}
    Suppose $w_2$ is non-PE. 
    Let $\al = (\al_1, \al_2) \in \RR^{q_1} \x \RR^{q_2}$ and let $\cW^{\star}$ denote the non-PE subspace of $w$. Since each $\al_i\T w_i$ is not PE, because $w_i$ is non-PE, we have that $(\al_1, 0),\, (0, \al_2) \in \cW^{\star}$. But $\cW^{\star}$ is a subspace, since $w$ is a regular regressor, and so $\al = (\al_1, 0) + (0, \al_2) \in \cW^{\star}$.

   \ref{itm:PE:stack:props:b}
    Suppose $w_2$ is non-PE. 
    By \ref{itm:PE:stack:props:a} we know $w$ is non-PE and the result follows by Lemma~\ref{lem:noPE:L} since $w_1 + w_2 = \begin{bmatrix}I & I\end{bmatrix} w$.

    \ref{itm:PE:stack:props:c}
    Suppose $w_2$ is PE. 
    For the sake of contradiction, suppose that $w_1 + w_2$ is not PE. Then there exists a non-zero $\al \in \RR^q$ such that $\al\T (w_1 + w_2)$ is not PE. Since $\al\T (w_1 + w_2)$ is a scalar, it is also non-PE. Since $w_1$ is non-PE, we have that $- \al\T w_1$ is also non-PE.
    By Lemma~\ref{lem:PE:reg:component} we have that $(\al\T (w_1 + w_2), - \al\T w_1)$ is a regular regressor. By \ref{itm:PE:stack:props:b} we have that $\al\T (w_1 + w_2) + (- \al\T w_1) = \al\T w_2$ is not PE. But $\al$ is non-zero, contradicting the fact that $w_2$ is PE according to Proposition~\ref{prop:PE:PE_al}.
\end{proof}

\section{The PE Decomposition}
\label{sec:PEdecomp}

The previous section identified several advantageous properties of regular regressors. However, the most important property is that any available excitation of a regular regressor is always confined to a subspace of the regressor space called the {\em PE subspace}. Moreover, there is always a component of the regressor (a reduced order regressor) that manifests all the excitation properties of the original regressor.

\begin{definition} \label{def:PEsub}
    Let $\cW^{\star}$ denote the non-PE set of $w$.
    The {\em PE subspace of $w$} is the subspace $\cW := (\cW^{\star})^{\perp}$. 
    We say the {\em degree of PE of $w$} is $q\pe := \dim(\cW)$.
    \rqed
\end{definition}

Unlike the non-PE set $\cW^{\star}$, the set $\cW$ is always a subspace (even for non-regular regressors) since the standard inner product is linear in each argument.
For a regular regressor we have that $\cW^{\star} = \cW^{\perp}$ and so we will generally denote $\cW^{\perp}$ as the non-PE subspace (instead of $\cW^{\star}$).
 
We present a decomposition of regular regressors characterizing their inherent excitation. This result is based on projection maps that have been widely used in linear geometric control \cite{WONHAM85}.
Given subspaces $\cV_i$ satisfying $\cV_1 \oplus \cV_2 = \RR^q$, let $\Pi_{\cV_1}^{\cV_2} \in \RR^{q \x q}$ denote the {\em projection on $\cV_1$ along $\cV_2$}.
Let $U_{\cV_1}$ denote an {\em insertion map of $\cV_1$}, meaning it has full column rank and $\Img(U_{\cV_1}) = \cV_1$.
Let $D_{\cV_2}$ denote a {\em natural projection along $\cV_2$}, meaning $D_{\cV_2}\T$ is an insertion map of $\cV_2^{\perp}$. In other words, $D_{\cV_2}$ has full row rank and $\Ker(D_{\cV_2}) = \cV_2$.
We specifically consider pairs $(U_{\cV_1}, D_{\cV_2})$ acting as a coordinate transformation between $\cV_1$ and $\RR^{\dim(\cV_1)}$.
For this to be the case, we must impose that $D_{\cV_2} U_{\cV_1} = I$ and we call such a pair a {\em projection pair}. The proof of the following relies on showing $U_{\cV_1} D_{\cV_2}$ is idempotent \cite[p.~9]{WONHAM85}.

\begin{proposition}
    Suppose $(U_{\cV_1}, D_{\cV_2})$ is a projection pair.
    Then $U_{\cV_1} D_{\cV_2} = \Pi_{\cV_1}^{\cV_2}$.
    \dqed
\end{proposition}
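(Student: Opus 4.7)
The plan is to verify that $P := U_{\cV_1} D_{\cV_2}$ matches the standard characterization of $\Pi_{\cV_1}^{\cV_2}$: it is the unique linear map that restricts to the identity on $\cV_1$ and to zero on $\cV_2$ (such a map exists and is unique because $\cV_1 \oplus \cV_2 = \RR^q$). Equivalently, as suggested by the paper, it is the unique idempotent with $\Img(P) = \cV_1$ and $\Ker(P) = \cV_2$. I would organize the argument into three short checks, then appeal to the direct sum to conclude.

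First, idempotency is an immediate consequence of the projection-pair identity $D_{\cV_2} U_{\cV_1} = I$:
\begin{align*}
    P^2
    =
    U_{\cV_1} (D_{\cV_2} U_{\cV_1}) D_{\cV_2}
    =
    U_{\cV_1} D_{\cV_2}
    =
    P \,.
\end{align*}
Next, I would identify the image. Since $\Img(U_{\cV_1}) = \cV_1$, we have $\Img(P) \subseteq \cV_1$. For the reverse inclusion, any $v \in \cV_1$ may be written as $v = U_{\cV_1} c$ because $U_{\cV_1}$ is an insertion map onto $\cV_1$; then $P v = U_{\cV_1} D_{\cV_2} U_{\cV_1} c = U_{\cV_1} c = v$, which simultaneously yields $\cV_1 \subseteq \Img(P)$ and the fact that $P$ fixes every vector in $\cV_1$.

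For the kernel, $\Ker(D_{\cV_2}) = \cV_2$ gives $\cV_2 \subseteq \Ker(P)$, so $P$ annihilates $\cV_2$. Conversely, if $P v = 0$ then $U_{\cV_1} D_{\cV_2} v = 0$, and since $U_{\cV_1}$ has full column rank it is left-invertible, forcing $D_{\cV_2} v = 0$ and hence $v \in \cV_2$. Thus $\Ker(P) = \cV_2$.

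Finally, since $\cV_1 \oplus \cV_2 = \RR^q$, any $v \in \RR^q$ decomposes uniquely as $v = v_1 + v_2$ with $v_i \in \cV_i$, and both $P$ and $\Pi_{\cV_1}^{\cV_2}$ send this to $v_1$ by the two restrictions just established; hence $P = \Pi_{\cV_1}^{\cV_2}$. There is no real obstacle here: the only point requiring care is invoking the full column rank of $U_{\cV_1}$ to deduce the kernel inclusion, which is exactly the property that makes $(U_{\cV_1}, D_{\cV_2})$ a coordinate transformation between $\cV_1$ and $\RR^{\dim(\cV_1)}$ rather than a mere pair of related maps.
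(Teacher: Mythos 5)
Your proof is correct and follows essentially the route the paper indicates: establish idempotency of $U_{\cV_1} D_{\cV_2}$ from $D_{\cV_2} U_{\cV_1} = I$, identify its image as $\cV_1$ and kernel as $\cV_2$, and invoke the uniqueness of the projection determined by $\cV_1 \oplus \cV_2 = \RR^q$. The only cosmetic remark is that once you show $P$ is the identity on $\cV_1$ and zero on $\cV_2$, the direct-sum argument already pins down $P = \Pi_{\cV_1}^{\cV_2}$, so the idempotency check is not strictly needed in your version, though it mirrors the characterization cited from Wonham.
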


We recall a useful fact concerning the orthogonal complement of a sum of subspaces \cite[p.~22]{WONHAM85}.

\begin{proposition} \label{prop:LA:perp_sum}
    Let $\cV_1,\, \cV_2 \subseteq \RR^q$ be subspaces. Then $(\cV_1 + \cV_2)^{\perp} = \cV_1^{\perp} \cap \cV_2^{\perp}$.
    \dqed
\end{proposition}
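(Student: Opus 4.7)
The plan is to establish the equality by proving the two set inclusions separately, using only the definition of orthogonal complement and the bilinearity of the standard inner product. This is a classical linear-algebra identity, so no deep machinery is needed; the proof is essentially a bookkeeping exercise with inner products.

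First I would show $(\cV_1 + \cV_2)^{\perp} \subseteq \cV_1^{\perp} \cap \cV_2^{\perp}$. Pick $x \in (\cV_1 + \cV_2)^{\perp}$, so that $\iprod{x}{v} = 0$ for every $v \in \cV_1 + \cV_2$. Since $\cV_1, \cV_2 \subseteq \cV_1 + \cV_2$ (each subspace contains $0$, and we can write any $v_i \in \cV_i$ as $v_i + 0$), this forces $\iprod{x}{v_1} = 0$ for all $v_1 \in \cV_1$ and $\iprod{x}{v_2} = 0$ for all $v_2 \in \cV_2$. Hence $x \in \cV_1^{\perp}$ and $x \in \cV_2^{\perp}$, giving $x \in \cV_1^{\perp} \cap \cV_2^{\perp}$.

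For the reverse inclusion $\cV_1^{\perp} \cap \cV_2^{\perp} \subseteq (\cV_1 + \cV_2)^{\perp}$, pick $x \in \cV_1^{\perp} \cap \cV_2^{\perp}$ and an arbitrary $v \in \cV_1 + \cV_2$. By definition of the sum of subspaces, $v = v_1 + v_2$ for some $v_i \in \cV_i$. Linearity of the inner product in its second argument then yields $\iprod{x}{v} = \iprod{x}{v_1} + \iprod{x}{v_2} = 0 + 0 = 0$, using that $x$ annihilates both $\cV_1$ and $\cV_2$. Since $v$ was arbitrary, $x \in (\cV_1 + \cV_2)^{\perp}$.

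There is no real obstacle here; the only thing to be careful about is that the statement does not presume either subspace to be finite-dimensional beyond the ambient $\RR^q$ (which is automatic) and does not need any closure or decomposition result such as $(\cV^{\perp})^{\perp} = \cV$. Both inclusions follow directly from the bilinearity of the inner product together with the fact that $\cV_i \subseteq \cV_1 + \cV_2$, which is why this identity is purely structural and independent of dimension.
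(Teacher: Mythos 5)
Your proof is correct. The paper does not actually prove this proposition --- it simply cites it as a standard fact from Wonham's text --- and your double-inclusion argument, using only the definition of the orthogonal complement, the containments $\cV_i \subseteq \cV_1 + \cV_2$, and linearity of the inner product, is exactly the textbook proof one would supply in its place.
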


Now consider the case when $\cV_1 = \cW$ is the PE subspace and $\cV_2 = \cV$ satisfies $\cW \oplus \cV = \RR^q$. Based on the foregoing, the projection pairs $(U_{\cW}, D_{\cV})$ and $(U_{\cV}, D_{\cW})$ define a change of coordinates on $\RR^q$ satisfying
\begin{alignat*}{3}
    \Pi_{\cW}^{\cV} &= U_{\cW} D_{\cV}
    \,,&\quad
    I &= D_{\cV} U_{\cW}
    \,,&\quad
    0 &= D_{\cW} U_{\cW}
    \,,
    \\
    \Pi_{\cV}^{\cW} &= U_{\cV} D_{\cW}
    \,,&\quad
    I &= D_{\cW} U_{\cV}
    \,,&\quad
    0 &= D_{\cV} U_{\cV}
    \,.
\end{alignat*}

\begin{theorem}[PE Decomposition] \label{thm:PEdecomp}
    Suppose $w(t) \in \RR^q$ is a regular regressor.
    Let $\cW$ denote the PE subspace of $w$ and let $q\pe$ denote its degree of PE. Then for any $\cV$ such that $\cW \oplus \cV = \RR^q$ we have that
    \begin{align*}
        w 
        &=
        \Pi_{\cW}^{\cV} w + \Pi_{\cV}^{\cW} w
        = 
        U_{\cW} D_{\cV} w + U_{\cV} D_{\cW} w
        \\&
        =:
        U_{\cW} w\pe + U_{\cV} w\pr
    \end{align*}
    with $w\pe(t) \in \RR^{q\pe}$ PE and $w\pr(t) \in \RR^{q - q\pe}$ non-PE.
    \dqed
\end{theorem}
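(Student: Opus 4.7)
My plan is to separate the statement into an algebraic decomposition identity and two excitation claims: that $w\pe := D_\cV w$ is PE and that $w\pr := D_\cW w$ is non-PE.

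First I would establish the identity. Since $\cW \oplus \cV = \RR^q$, every vector decomposes uniquely into $\cW$ and $\cV$ components, so $\Pi_{\cW}^{\cV} + \Pi_{\cV}^{\cW} = I$. Applying this to $w(t)$ and substituting the preceding proposition's identifications $\Pi_{\cW}^{\cV} = U_\cW D_\cV$ and $\Pi_{\cV}^{\cW} = U_\cV D_\cW$ yields both displayed forms at once. This step is essentially bookkeeping with projections.

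Next I would handle the non-PE claim. For arbitrary $\beta \in \RR^{q - q\pe}$ one has $\beta\T w\pr = (D_\cW\T \beta)\T w$, and since $D_\cW\T$ is an insertion map of $\cW^\perp$, the vector $D_\cW\T \beta$ lies in $\cW^\perp$. Regularity of $w$ gives $\cW^\perp = \cW^{\star}$, so by the definition of $\cW^{\star}$ the scalar signal $(D_\cW\T \beta)\T w$ is not PE. Since $\beta$ was arbitrary, $w\pr$ is non-PE.

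Finally, I would show that $w\pe$ is PE, and this is where I anticipate the only real subtlety. Because $w$ is bounded, so is $w\pe = D_\cV w$, whence Proposition~\ref{prop:PE:PE_al} reduces matters to showing that $\alpha\T w\pe = (D_\cV\T \alpha)\T w$ is PE for every non-zero $\alpha \in \RR^{q\pe}$; equivalently, that $D_\cV\T \alpha \notin \cW^{\star} = \cW^\perp$. Since $D_\cV\T$ has full column rank, $D_\cV\T \alpha$ is non-zero and sits in $\cV^\perp$. The key step is Proposition~\ref{prop:LA:perp_sum}, which gives $\cV^\perp \cap \cW^\perp = (\cV + \cW)^\perp = \{0\}$, ruling out $D_\cV\T \alpha \in \cW^\perp$. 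Hence $\alpha\T w\pe$ is PE, and so is $w\pe$. The main obstacle is precisely this last step: the other parts are routine projection bookkeeping, while the PE claim requires combining Proposition~\ref{prop:PE:PE_al} (which is why boundedness of $w$ is used), the regularity identification $\cW^{\star} = \cW^\perp$, and Proposition~\ref{prop:LA:perp_sum} to eliminate the possibility that $D_\cV\T \alpha$ lies in the non-PE subspace.
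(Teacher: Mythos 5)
Your proposal is correct and follows essentially the same route as the paper's proof: it rests on the regularity identification $\cW^{\star} = \cW^{\perp}$, the full-rank properties of the natural projections ($\Img(D_{\cV}\T) = \cV^{\perp}$ with trivial kernel), and Proposition~\ref{prop:LA:perp_sum} giving $\cW^{\perp} \cap \cV^{\perp} = \{0\}$, together with Proposition~\ref{prop:PE:PE_al} (boundedness) for the PE claim. The only difference is cosmetic: you argue the PE part directly while the paper phrases it as a contradiction, and you cite Proposition~\ref{prop:PE:PE_al} explicitly where the paper uses it implicitly.
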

\begin{proof}
    First we show that $w\pe$ is PE.
    Suppose not.
    Then there exists a non-zero $\al\pe \in \RR^{q\pe}$ such that $\al\pe\T w\pe$ is not PE. Define $\al := D_{\cV}\T \al\pe \in \RR^q$ and observe that 
    \begin{align*}
        \al\T w
        &=
        \al\pe\T D_{\cV} U_{\cW} w\pe + \al\pe\T D_{\cV} U_{\cV} w\pr
        =
        \al\pe\T w\pe
    \end{align*}
    since $(U_{\cW}, D_{\cV})$ and $(U_{\cV}, D_{\cW})$ are projection pairs.
    Also, since $\al\T w = \al\pe\T w\pe$ is not PE by assumption, we have that $\al \in \cW^{\star} = \cW^{\perp}$. Given that $\Img(D_{\cV}\T) = \cV^{\perp}$ by definition of the natural projection, we have $\al \in \cW^{\perp} \cap \cV^{\perp}$.
    By Proposition~\ref{prop:LA:perp_sum} we have $\cW^{\perp} \cap \cV^{\perp} = (\cW + \cV)^{\perp} = (\RR^q)^{\perp} = \{\, 0 \,\}$ and so $\al = 0$. But then $\al\pe \in \Ker(D_{\cV}\T) = \{\, 0 \,\}$, given that the natural projection has full row rank, contradicting the fact that $\al\pe \neq 0$.

    Next we show that $w\pr$ is non-PE.
    For any $\al\pr \in \RR^{q - q\pe}$ define $\al := D_{\cW}\T \al\pr \in \RR^q$. Given that $\Img(D_{\cW}\T) = \cW^{\perp}$ we have $\al \in \cW^{\perp}$. Since $\al \in \cW^{\perp} = \cW^{\star}$, we know that $\al\T w$ is not PE. Proceeding similarly to our work above, one can show that $\al\T w = \al\pr\T w\pr$, and so $\al\pr\T w\pr$ is not PE. Since $\al\pr$ was arbitrary, this proves the result.
\end{proof}


The proof shows that the PE decomposition is a direct consequence of a geometric notion of regularity in terms of the subspace 
$\cW^{\star} = \cW^{\perp}$. The excitation of the regressor is captured by a reduced order regressor $w\pe$. This finding is related to previous work on {\em partial PE} or {\em semi-PE} \cite{SADEGH_HOROWITZ90, LI_KRSTIC98, MARINOTOMEI23, GOEL20, GUO23, ARANOVSKI23}. The key distinction is that the PE decomposition emerges as an inherent property of a regular regressor, whereas prior works take partial PE as the starting point. As such, regularity may be viewed as the more fundamental property.

The idea of finding a reduced order regressor to characterize the PE of the original regressor can be extended.
\begin{lemma} \label{lem:PE:L:PEsub}
    Suppose $w(t) \in \RR^q$ is a regular regressor.
    Let $\cW$ denote the PE subspace of $w$ and let $L \in \RR^{p \x q}$ be a matrix.
    Then the PE subspace of $L w$ is $L \cW$.
    \dqed
\end{lemma}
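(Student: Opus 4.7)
The plan is to prove the two subspaces coincide by comparing their orthogonal complements, since non-PE data (rather than PE data) is what Definition~\ref{def:PE:nonPE} gives us direct access to. First I would note that $Lw$ is itself a regular regressor by Lemma~\ref{lem:PE:reg:component}, so letting $\cW_L$ denote its PE subspace, the set $\cW_L^{\perp}$ is precisely its non-PE subspace by Definition~\ref{def:PEsub}. The claim $\cW_L = L\cW$ is therefore equivalent to $\cW_L^{\perp} = (L\cW)^{\perp}$.

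Next I would compute $\cW_L^{\perp}$ directly via a pullback through $L\T$. For any $\al \in \RR^p$, we have $\al \in \cW_L^{\perp}$ iff $\al\T(Lw) = (L\T\al)\T w$ is not PE. Because $w$ is regular, its non-PE set coincides with $\cW^{\perp}$, so this is equivalent to $L\T \al \in \cW^{\perp}$. Hence $\cW_L^{\perp} = \{\, \al \in \RR^p ~:~ L\T \al \in \cW^{\perp} \,\}$.

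Finally, the standard adjoint identity $\iprod{\al}{Lx} = \iprod{L\T\al}{x}$ shows that this pullback is exactly the annihilator $(L\cW)^{\perp}$: an element $\al$ annihilates every $Lx$ with $x \in \cW$ iff $L\T \al$ annihilates every $x \in \cW$, iff $L\T \al \in \cW^{\perp}$. Taking orthogonal complements of both sides of $\cW_L^{\perp} = (L\cW)^{\perp}$ then yields $\cW_L = L\cW$, as desired.

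The only delicate point is the pullback step: regularity of $w$ is essential to convert the nonlinear statement ``$(L\T\al)\T w$ is not PE'' into the linear membership condition $L\T \al \in \cW^{\perp}$. Without regularity, the non-PE set $\cW^{\star}$ need not be a subspace, the equivalence $\cW^{\star} = \cW^{\perp}$ fails, and the chain of arguments breaks down — consistent with the cautionary example of Section~\ref{sec:PE:reg:pathology}.
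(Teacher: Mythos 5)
Your proof is correct and follows essentially the same route as the paper: both convert membership in the non-PE set of $Lw$ into the pullback condition $L\T\al \in \cW^{\perp}$ using regularity of $w$, identify $\{\, \al : L\T\al \in \cW^{\perp} \,\}$ with $(L\cW)^{\perp}$ via the adjoint identity, and conclude by taking orthogonal complements. The only cosmetic difference is that you invoke Lemma~\ref{lem:PE:reg:component} to get regularity of $Lw$ so you may write $\cW_L^{\perp}$ for its non-PE set, whereas the paper works directly with the non-PE set $\cW_L^{\star}$ and complements only at the end, so that extra step is not strictly needed.
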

\begin{proof}
    Let $\cW_{L}$ denote the PE subspace of $L w$.
    Since $\cW$ is a subspace by regularity, it suffices to show that $\cW_{L}^{\star} = (L \cW)^{\perp}$.
    Notice that $\al \in \cW_{L}^{\star}$ if and only if $\al\T L w$ is not PE, meaning that $L\T \al \in \cW^{\star}$ where $\cW^{\star}$ is the non-PE set of $w$. By regularity we know that $\cW^{\star} = \cW^{\perp}$ and so the above is equivalent to the fact that for all $v \in \cW$ we have that $\al\T L v = 0$. Since $v \in \cW$ is arbitrary, this is the same as saying that $L v \in L \cW$ is arbitrary; that is, $\al \in (L \cW)^{\perp}$.
\end{proof}

Finally, we conclude this section with the significant finding that the only subspace satisfying the conclusions of Theorem~\ref{thm:PEdecomp} is the PE subspace; in other words, no other subspace characterizes the excitation of a regressor.

\begin{theorem}
    Suppose $w(t) \in \RR^q$ is a regular regressor.
    Let $\cW$ denote the PE subspace of $w$ and let $\cW\0 \subseteq \RR^q$ be a subspace with the property that for any $\cV\0$ such that $\cW\0 \oplus \cV\0 = \RR^q$ we have that
    \begin{align*}
        w
        =
        U_{\cW\0} D_{\cV\0} w + U_{\cV\0} D_{\cW\0} w
        =:
        U_{\cW\0} w_1 + U_{\cV\0} w_2
    \end{align*}
    with $w_1$ PE and $w_2$ non-PE.
    Then $\cW\0 = \cW$.
    \dqed
\end{theorem}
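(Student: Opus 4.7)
The plan is to extract both $\cW \subseteq \cW\0$ and $\dim(\cW) \geq \dim(\cW\0)$ from a single application of Lemma~\ref{lem:PE:L:PEsub} to each of the two natural projections $D_{\cV\0}$ and $D_{\cW\0}$ associated with the hypothesised decomposition. The key observation is that this decomposition identifies $w_1 = D_{\cV\0} w$ and $w_2 = D_{\cW\0} w$, so the two components are linear images of the original regular regressor and their PE subspaces can be read off directly from Lemma~\ref{lem:PE:L:PEsub}.

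First, I would exploit the non-PE half of the hypothesis. Since $w_2 = D_{\cW\0} w$ is non-PE, its non-PE set is all of $\RR^{q - \dim(\cW\0)}$, so its PE subspace is $\{\, 0 \,\}$. Lemma~\ref{lem:PE:L:PEsub} identifies this PE subspace with $D_{\cW\0} \cW$, whence $D_{\cW\0} \cW = \{\, 0 \,\}$, which gives $\cW \subseteq \Ker(D_{\cW\0}) = \cW\0$.

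Next, I would use the PE half. Since $w_1 = D_{\cV\0} w$ is PE, Proposition~\ref{prop:PE:PE_al} forces its non-PE set to be $\{\, 0 \,\}$, so its PE subspace is all of $\RR^{\dim(\cW\0)}$. Lemma~\ref{lem:PE:L:PEsub} identifies this PE subspace with $D_{\cV\0} \cW$, so the restriction $D_{\cV\0}\big|_{\cW}$ maps $\cW$ onto $\RR^{\dim(\cW\0)}$; in particular $\dim(\cW) \geq \dim(\cW\0)$.

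Combining $\cW \subseteq \cW\0$ with $\dim(\cW) \geq \dim(\cW\0)$ forces $\cW = \cW\0$, as required. There is no real obstacle once Lemma~\ref{lem:PE:L:PEsub} is in hand; the small conceptual point to notice is that invoking the hypothesis for a single well-chosen complementary $\cV\0$ already suffices, because the two natural projections $D_{\cV\0}$ and $D_{\cW\0}$ simultaneously pin down containment and dimension from opposite sides.
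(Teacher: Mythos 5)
Your proof is correct, and it takes a genuinely more direct route than the paper's. The paper also ultimately reduces matters to Lemma~\ref{lem:PE:L:PEsub}, but only after subtracting the two decompositions $w = U_{\cW} w\pe + U_{\cV} w\pr = U_{\cW\0} w_1 + U_{\cV\0} w_2$ and invoking the closure properties of regular regressors (Lemma~\ref{lem:noPE:L} and Theorem~\ref{thm:PE:stack:props}) to show that the mismatch terms $(U_{\cW} - U_{\cW\0} (D_{\cV\0} U_{\cW})) w\pe$ and $(U_{\cW\0} - U_{\cW} (D_{\cV} U_{\cW\0})) w_1$ are non-PE; Lemma~\ref{lem:PE:L:PEsub} then forces the matrix identities $U_{\cW} = U_{\cW\0} (D_{\cV\0} U_{\cW})$ and $U_{\cW\0} = U_{\cW} (D_{\cV} U_{\cW\0})$, from which the two inclusions follow by comparing images of the insertion maps. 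You bypass the subtraction argument entirely by applying Lemma~\ref{lem:PE:L:PEsub} straight to the natural projections: $w_2 = D_{\cW\0} w$ non-PE gives $D_{\cW\0} \cW = \{\, 0 \,\}$, i.e.\ $\cW \subseteq \Ker(D_{\cW\0}) = \cW\0$, while $w_1 = D_{\cV\0} w$ PE gives $D_{\cV\0} \cW = \RR^{\dim(\cW\0)}$, i.e.\ $\dim(\cW) \geq \dim(\cW\0)$, and the inclusion plus the dimension bound closes the argument. Two minor remarks: the direction of Proposition~\ref{prop:PE:PE_al} you invoke (PE of $w_1$ implies $\al\T w_1$ PE for every nonzero $\al$) already follows from Definition~\ref{def:PE} without boundedness, so its use is harmless but not needed; and, as you note, the hypothesis is only exercised for a single complement $\cV\0$ (the same is true, implicitly, of the paper's proof). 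What your route buys is brevity and a clean separation of how each half of the hypothesis is used; what the paper's route buys is the explicit change-of-basis identities between the two decompositions, a by-product your dimension count does not exhibit.
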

\begin{proof}
    Using the PE decomposition, we have two representations of the regressor $w$. They are
    \begin{align*}
        w
        &=
        U_{\cW} w\pe + U_{\cV} w\pr
        \,, \quad
        w
        =
        U_{\cW\0} w_1 + U_{\cV\0} w_2
        \,.
    \end{align*}
    To prove the result, we will compare the mismatch between these representations.
    Since $\Pi_{\cV}^{\cW} w = U_{\cV} w\pr$ and $\Pi_{\cV\0}^{\cW\0} w = U_{\cV\0} w_2$, they are each non-PE by Lemma~\ref{lem:noPE:L}. 
    Subtracting the representation resulting from $\cW\0 \oplus \cV\0 = \RR^q$ from the PE decomposition, we have that
    \begin{align*}
        0
        &=
        w - w
        =
        (\Pi_{\cW}^{\cV} - \Pi_{\cW\0}^{\cV\0}) w 
        + \Pi_{\cV}^{\cW} w + (- \Pi_{\cV\0}^{\cW\0} w)
        \,.
    \end{align*}
    Notice that $(\Pi_{\cW}^{\cV} - \Pi_{\cW\0}^{\cV\0}) w$ is non-PE by Theorem~\ref{thm:PE:stack:props}, since all the other terms are non-PE.
    Then by the PE decomposition we obtain
    \begin{align*}
        (\Pi_{\cW}^{\cV} - \Pi_{\cW\0}^{\cV\0}) w
        &=
        (\Pi_{\cW}^{\cV} - \Pi_{\cW\0}^{\cV\0}) (\Pi_{\cW}^{\cV} + \Pi_{\cV}^{\cW}) w
        \\&
        =:
        \wt
        + (\Pi_{\cW}^{\cV} - \Pi_{\cW\0}^{\cV\0}) \Pi_{\cV}^{\cW} w
        \,.
    \end{align*}
    Since $(\Pi_{\cW}^{\cV} - \Pi_{\cW\0}^{\cV\0}) \Pi_{\cV}^{\cW} w$ is also non-PE by Lemma~\ref{lem:noPE:L}, we conclude that $\wt$ is non-PE again by Theorem~\ref{thm:PE:stack:props}. Expressing $\wt$ in terms of projection pairs, we have that
    \begin{align*}
        \wt = (U_{\cW} - U_{\cW\0} (D_{\cV\0} U_{\cW})) w\pe
        \,.
    \end{align*}
    Since $\wt$ is non-PE, its PE subspace
    must be $\tilde{\cW} := (\RR^q)^{\perp} = \{\, 0 \,\}$. But by Lemma~\ref{lem:PE:L:PEsub} we also know that 
    \begin{align*}
        \{\, 0 \,\} 
        = 
        \tilde{\cW}
        = 
        (U_{\cW} - U_{\cW\0} (D_{\cV\0} U_{\cW})) \RR^{q\pe}
    \end{align*}
    since $w\pe$ is PE. Therefore $U_{\cW} = U_{\cW\0} (D_{\cV\0} U_{\cW})$, implying that $\cW = \Img(U_{\cW}) \subseteq \Img(U_{\cW\0}) = \cW\0$ by definition of the insertion maps.
    To show the reverse inclusion $\cW\0 \subseteq \cW$, observe that by the same arguments as before we have that
    \begin{align*}
        0
        &=
        w - w
        \\&
        =
        (\Pi_{\cW\0}^{\cV\0} - \Pi_{\cW}^{\cV}) w 
        + \Pi_{\cV\0}^{\cW\0} w + (- \Pi_{\cV}^{\cW} w)
        \\&
        =
        (\Pi_{\cW\0}^{\cV\0} - \Pi_{\cW}^{\cV}) (\Pi_{\cW\0}^{\cV\0} + \Pi_{\cV\0}^{\cW\0}) w
        + \Pi_{\cV\0}^{\cW\0} w + (- \Pi_{\cV}^{\cW} w)
        \\&
        =:
        \wt\0
        + (\Pi_{\cW\0}^{\cV\0} - \Pi_{\cW}^{\cV}) \Pi_{\cV\0}^{\cW\0} w
        + \Pi_{\cV\0}^{\cW\0} w + (- \Pi_{\cV}^{\cW} w)
    \end{align*}
    where $\wt\0 = (U_{\cW\0} - U_{\cW} (D_{\cV} U_{\cW\0})) w_1$ is non-PE. Using the fact that the PE subspace of $\wt\0$ is 
    \begin{align*}
        \{\, 0 \,\} 
        =
        \tilde{\cW}\0 
        =
        (U_{\cW\0} - U_{\cW} (D_{\cV} U_{\cW\0})) \RR^{\dim(\cW\0)}
    \end{align*}
    since $w_1$ is PE, we obtain $U_{\cW\0} = U_{\cW} (D_{\cV} U_{\cW\0})$ and thus $\cW\0 = \Img(U_{\cW\0}) \subseteq \Img(U_{\cW}) = \cW$.
\end{proof}

\section{Conditions for Regularity}
\label{sec:conds}

We have presented a geometric notion of regular regressors. We now go deeper to explore sufficient conditions for regularity in terms of signal properties. Arguably the simplest condition arises in the case of vanishing regressors. The following is a consequence of  \cite[Lemma~6.5~(i)]{NARENDRA89}.

\begin{proposition} \label{prop:vanish_noPE}
    Suppose $w(t) \in \RR^q$ satisfies $w \to 0$.
    Then for all $\al \in \RR^q$, $\al\T w$ is not PE.
    \dqed
\end{proposition}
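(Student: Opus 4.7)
The plan is to reduce everything to the scalar setting and then directly negate the PE definition. Fix an arbitrary $\al \in \RR^q$ and consider the scalar signal $s(t) := \al\T w(t)$. By Cauchy--Schwarz we have the pointwise bound $|s(t)| \leq \|\al\|\,\|w(t)\|$, so the hypothesis $w \to 0$ immediately gives $s(t) \to 0$. This reduces the problem to showing that any scalar signal converging to zero fails to satisfy \eqref{eq:PE}.

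Next I would negate the PE condition. Fix arbitrary candidate constants $\beta\pe, T\pe > 0$ and exhibit a single interval violating the lower bound. Using $s(t) \to 0$, pick $t_0$ large enough that $|s(\tau)|^2 < \beta\pe/2$ for every $\tau \geq t_0$. Then
\begin{align*}
    \frac{1}{T\pe} \int_{t_0}^{t_0 + T\pe} s(\tau) s\T(\tau) \,d\tau
    \leq
    \frac{\beta\pe}{2}
    <
    \beta\pe \cdot 1
    \,,
\end{align*}
so the PE inequality fails at $t_0$. Since $\beta\pe, T\pe$ were arbitrary, no choice of them makes $s$ PE, and since $\al$ was arbitrary we conclude $\al\T w$ is not PE for every $\al \in \RR^q$.

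There is no real obstacle: the main point is simply that convergence to zero lets us uniformly dominate the integrand on a tail of the time axis, which is strictly stronger than what PE demands be avoided. The only mild care needed is the choice of the constant $\beta\pe/2$ (any positive constant strictly less than $\beta\pe$ works) to produce a strict inequality, so that the violation of \eqref{eq:PE} is unambiguous. Note also that we never invoke regularity of $w$; this is a statement purely about the non-PE property of every linear functional of a vanishing signal, matching the role the proposition will play in Section~\ref{sec:conds} as a building block toward regularity of a broader class of regressors.
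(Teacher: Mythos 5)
Your argument is correct: reducing to the scalar signal $s = \al\T w$ via Cauchy--Schwarz, then negating Definition~\ref{def:PE} by exhibiting, for every candidate pair $\beta\pe, T\pe > 0$, a tail interval on which the averaged integral drops below $\beta\pe$, is a complete and airtight proof, and you are right that no regularity of $w$ is needed. The paper, however, does not argue this way at all --- it gives no proof, simply stating the proposition as a consequence of a cited result (Lemma~6.5~(i) of Narendra--Annaswamy), which says that adding a vanishing perturbation to a PE signal preserves PE; the implicit derivation is that if $\al\T w$ were PE, then adding the vanishing signal $-\al\T w$ would force the zero signal to be PE, a contradiction. So your route is genuinely different and more elementary: it is self-contained, works directly from the definition, and makes explicit the quantifier structure (for all $\beta\pe, T\pe$ there is a violating $t_0$), whereas the citation route buys brevity by leaning on an established robustness-of-PE lemma whose own proof is essentially the same tail estimate you wrote out. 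One cosmetic remark: since $s$ is scalar, writing $s(\tau) s\T(\tau)$ is harmless but could simply be $s(\tau)^2$, and any fixed fraction of $\beta\pe$ (or even the non-strict bound $\le \beta\pe/2 < \beta\pe$) suffices, as you note.
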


In light of the above, a natural sufficient condition for regularity is to assume that any $\al\T w$ that is not PE is vanishing. The proof follows from linearity of limits.

\begin{proposition} \label{prop:PE:woPE_vanish:reg}
    Suppose $w(t) \in \RR^q$ is bounded. Also suppose that for any $\al \in \RR^q$ we have $\al\T w$ is not PE implies $\al\T w \to 0$.
    Then $w$ is a regular regressor.
    \dqed
\end{proposition}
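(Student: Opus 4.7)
The plan is to verify directly that the non-PE set $\cW^\star$ of $w$ is a subspace, since boundedness is already assumed. Recall that $\cW^\star$ is always closed under scalar multiplication, so the only thing to check is closure under vector addition.

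To that end, I would pick arbitrary $\al_1,\al_2 \in \cW^\star$, so that each scalar signal $\al_i\T w$ is not PE. By the standing hypothesis applied to $\al_1$ and $\al_2$ individually, this gives $\al_1\T w \to 0$ and $\al_2\T w \to 0$. Linearity of limits then yields
\begin{align*}
    (\al_1 + \al_2)\T w
    =
    \al_1\T w + \al_2\T w
    \to
    0.
\end{align*}
Applying Proposition~\ref{prop:vanish_noPE} to the scalar regressor $(\al_1+\al_2)\T w$ (or equivalently to the vector $\al_1+\al_2 \in \RR^q$ with trial direction $1 \in \RR$), one concludes that $(\al_1+\al_2)\T w$ is not PE, so $\al_1 + \al_2 \in \cW^\star$. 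This establishes that $\cW^\star$ is a subspace, and combined with the boundedness hypothesis this is exactly the definition of a regular regressor in Definition~\ref{def:PE:reg}.

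There is no real obstacle here: the argument is essentially a one-line application of linearity of limits, wrapped in Proposition~\ref{prop:vanish_noPE} on one side and the definition of the non-PE set on the other. The only care needed is to invoke the hypothesis in the correct direction — the hypothesis is that \emph{not PE implies vanishing}, and it is used to produce vanishing of the two summands so that their sum vanishes and therefore (by Proposition~\ref{prop:vanish_noPE}) fails to be PE.
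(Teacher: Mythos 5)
Your argument is correct and is precisely the argument the paper has in mind (the paper simply remarks that the proof ``follows from linearity of limits''): the hypothesis converts non-PE directions into vanishing signals, linearity of limits makes the sum vanish, and Proposition~\ref{prop:vanish_noPE} returns the sum to the non-PE set, so $\cW^{\star}$ is a subspace and $w$ is regular by Definition~\ref{def:PE:reg}.
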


\subsection{Dealing with Expanding Intervals of Inactivity}

The pathology exposed in Section~\ref{sec:PE:reg:pathology} is that one may have a regressor $w = (w_1, w_2)$ with non-vanishing components that are not PE yet reoccurring excitation may still exist with increasingly long intervals of inactivity. To amend this pathology, we need to ensure that excitation, if persistent, reoccurs at some finite interval. A property that best captures this notion is {\em relative density}, found in \cite[Definition~3, p.~342]{HALE80} when discussing {\em almost periodic} signals.

\begin{definition}
    \label{def:PE:RelDense}
    We say $\cT \subseteq [0, \infty)$ is {\em relatively dense} if there exists a $T_d > 0$ such that $\cT \cap [t, t + T_d] \neq \emptyset$ for all $t \in [0, \infty)$.
    \rqed
\end{definition}

Now fix any vector $\al \in \RR^q$ and constants $\beta,\, T > 0$. For a regressor $w(t) \in \RR^q$ define the set
\begin{align*}
    \cT^w&(\al, \beta, T)
    :=
    \\&
    \{\, t \geq 0 ~:~ \al\T \left(
    \frac{1}{T} \int_{t}^{t + T} w(\tau) w\T(\tau) \,d\tau
    \right) \al \geq \beta \| \al \|^2 \,\}
\end{align*}
that records all the times the regressor is excited along $\al$.
Since the length of intervals of inactivity is the main issue encountered with the pathology of Section~\ref{sec:PE:reg:pathology}, we consider the following condition.
\begin{enumerate}[label=(R)]
    \item \label{item:PE:R}
    For each $\al \in \RR^q$, either there exist $\beta,\, T > 0$ such that $\cT^w(\al, \beta, T)$ is relatively dense or for all $\beta,\, T > 0$ we have $\sup \cT^w(\al, \beta, T) < \infty$.
\end{enumerate}
Condition~\ref{item:PE:R} states that either there is an excitation level that reoccurs at some finite interval or that any excitation encountered stops eventually. In particular, reoccurring excitation with increasingly long intervals of inactivity is not permitted.

To ensure that non-vanishing regressors are indeed PE, we will require a continuity assumption on the regressor in addition to condition \ref{item:PE:R}. To see why, consider a smooth pulse train but where the pulses get narrower as time goes on. Clearly such a regressor is non-vanishing, but it is also not PE as narrower pulses mean that there is less excitation available over time. To end up with regressors where shorter and shorter bursts of excitation do not occur, we will limit the variation that regressors can undergo.
This is done by considering the class $\cP_{[0, \infty)}$ but relaxing boundedness of the derivative to a kind of uniform continuity.

\begin{definition} \label{def:upConst}
    We say $\rho \colon [0, \infty) \to \RR^q$ is {\em uniformly piecewise constant} if there exists a set $\cJ \subseteq [0, \infty)$ such that:
    \begin{enumerate}[label=\alph*)]
        \item
        $\rho$ is right continuous;
        \item 
        $\rho$ is constant on every connected interval in $[0, \infty) \setminus \cJ$;
        \item \label{itm:def:upConst:c}
        there exists a constant $\delta > 0$ such that $| t_2 - t_1 | \geq \delta$ for all distinct $t_1,\, t_2 \in \cJ$.
        \rqed
    \end{enumerate}
\end{definition}

\begin{remark}
    The prefix {\em uniformly} in Definition~\ref{def:upConst} is to emphasize that there must be a uniform minimum gap between times of discontinuities in point \ref{itm:def:upConst:c}.
    \tqed
\end{remark}

\begin{definition}
    We say $w \colon [0, \infty) \to \RR^q$ is {\em uniformly piecewise continuous} if there exists a uniformly piecewise constant function $\rho$ such that $w - \rho$ is uniformly continuous.
    \rqed
\end{definition}

We now show that enforcing intervals of inactivity to have a maximal length and assuming uniform piecewise continuity together ensure that regressors are well-behaved. In particular, these two requirements enforce that scalar regressors that are not PE are in fact vanishing.

\begin{theorem} \label{thm:PE:reg:R_vanish}
    Suppose $w(t) \in \RR^q$ is uniformly piecewise continuous.
    The following are equivalent:
    \begin{enumerate}[label=\alph*)]
        \item \label{itm:thm:PE:reg:R_vanish:R}
        $w$ satisfies \ref{item:PE:R};
        \item \label{itm:thm:PE:reg:R_vanish:vanish}
        for any $\al$ we have $\al\T w$ is not PE implies $\al\T w \to 0$.
        \dqed
    \end{enumerate}
\end{theorem}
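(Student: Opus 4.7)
The plan is to prove the two implications separately, with the reverse direction doing all the real work.

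For \ref{itm:thm:PE:reg:R_vanish:vanish}$\Rightarrow$\ref{itm:thm:PE:reg:R_vanish:R}, I would fix $\al \in \RR^q$ and dichotomize. If $\al\T w$ is PE then $\cT^w(\al, \beta\pe, T\pe) = [0, \infty)$ is trivially relatively dense. Otherwise the hypothesis gives $\al\T w \to 0$, so for any $\beta,\, T > 0$ the eventual smallness of $\al\T w$ bounds its $T$-window average, yielding $\sup \cT^w(\al, \beta, T) < \infty$.

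For \ref{itm:thm:PE:reg:R_vanish:R}$\Rightarrow$\ref{itm:thm:PE:reg:R_vanish:vanish}, fix $\al$ with $\al\T w$ not PE. First I would rule out the relatively-dense alternative of \ref{item:PE:R}: if $\cT^w(\al, \beta, T)$ were relatively dense with spacing $T_d$, then for each $t \geq 0$ one picks $s \in [t, t + T_d]$ in that set so that the window $[t, t + T + T_d]$ contains $[s, s + T]$ and hence
\begin{align*}
    \frac{1}{T + T_d} \int_{t}^{t + T + T_d} (\al\T w(\tau))^2 \, d\tau
    \geq
    \frac{\beta T}{T + T_d} \| \al \|^2
    \,,
\end{align*}
which would make $\al\T w$ PE. Hence the second alternative of \ref{item:PE:R} applies: $\sup \cT^w(\al, \beta, T) < \infty$ for every $\beta,\, T > 0$.

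The remaining task is to promote this averaged smallness to pointwise decay $\al\T w \to 0$. Suppose for contradiction $|\al\T w(t_n)| \geq \epsilon_0$ along some $t_n \to \infty$. Using the decomposition $\al\T w = v + p$ with $v$ uniformly continuous and $p$ uniformly piecewise constant with minimum inter-jump gap $\delta > 0$, I would choose $\eta_1$ from uniform continuity so that $|v(t) - v(s)| < \epsilon_0/4$ whenever $|t - s| < \eta_1$, and set $\eta := \min\{\eta_1, \delta/2\}$. Around each $t_n$, the right-continuity and minimum-gap property of $p$ guarantee that either $[t_n, t_n + \eta]$ or $[t_n - \eta, t_n]$ is a constancy interval on which $p = p(t_n)$, so $|\al\T w(t)| \geq |\al\T w(t_n)| - |v(t) - v(t_n)| \geq 3\epsilon_0 / 4$ there. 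Taking $T := \eta$ and any $\beta < (3 \epsilon_0/4)^2 / \| \al \|^2$, and passing to a subsequence in which the same case occurs, the left endpoints of these windows all lie in $\cT^w(\al, \beta, T)$, contradicting $\sup \cT^w(\al, \beta, T) < \infty$ since $t_n \to \infty$.

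The hard part is this last step: converting a bound on $T$-window averages into pointwise convergence. Uniform piecewise continuity is exactly what makes it work, because it simultaneously rules out arbitrarily narrow pulses (via the minimum inter-jump gap of $p$) and arbitrarily rapid oscillations (via uniform continuity of $v$)—precisely the pathologies, analogous to the one in Section~\ref{sec:PE:reg:pathology}, that would otherwise let non-vanishing scalar components escape persistent excitation.
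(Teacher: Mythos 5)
Your proposal is correct and follows essentially the same route as the paper: both directions hinge on the equivalence between $\al\T w$ being PE and relative density of $\cT^w(\al,\beta,T)$ (via the same window-enlargement estimate), followed by the contradiction argument that a non-vanishing sequence $t_n \to \infty$ forces unbounded membership in some $\cT^w(\al,\beta,T)$. Your explicit use of the decomposition $\al\T w = v + p$ with the minimum inter-jump gap $\delta$ merely fills in the step the paper states in one line (the existence of a uniform interval length $T\0$ on which $|\al\T w|$ stays bounded below), so it is the same argument carried out in slightly more detail.
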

\begin{proof}
    We consider $\alpha \neq 0$ as the requirements in both \ref{itm:thm:PE:reg:R_vanish:R} and \ref{itm:thm:PE:reg:R_vanish:vanish} are trivially satisfied when $\alpha = 0$.
    We first show that $\al\T w$ is PE if and only if there exist $\beta,\, T > 0$ such that $\cT^w(\al, \beta, T)$ is relatively dense.

    ($\implies$)
    Suppose $\al\T w$ is PE. 
    By Definition~\ref{def:PE} there exist $\beta\pe,\, T\pe > 0$ such that $\cT^w(\al, \beta\pe \| \al \|^{-2}, T\pe) = [0, \infty)$. Clearly $[0, \infty)$ is relatively dense.

    ($\impliedby$)
    Suppose $\cT^w(\al, \beta, T)$ is relatively dense for some $\beta,\, T > 0$.
    Then there exists a $T_d > 0$ such that for every $t \geq 0$ there exists a time $t_1(t)$ satisfying $t_1 \in \cT^w(\al, \beta, T) \cap [t, t + T_d]$.
    Let $\gamma > 1$ be such that $T_d + T = \gamma T$. Then $[t_1, t_1 + T] \subseteq [t, t + \gamma T]$ and so
    \begin{align*}
        \frac{1}{\gamma T} &\int_{t}^{t + \gamma T} \al\T w(\tau) w\T(\tau) \al \,d\tau
        \\&
        \geq
        \frac{1}{\gamma} \al\T \left(
        \frac{1}{T} \int_{t_1}^{t_1 + T} w(\tau) w\T(\tau) \,d\tau
        \right) \al
        \geq
        \frac{\beta}{\gamma} \| \al \|^2 \cdot 1
    \end{align*}
    for all $t \geq 0$; that is, $\al\T w$ is PE.

    Now we show that \ref{itm:thm:PE:reg:R_vanish:R} and \ref{itm:thm:PE:reg:R_vanish:vanish} are equivalent.

    ($\implies$)
    Suppose $w$ satisfies \ref{item:PE:R}.
    Let $\al \in \RR^q$ be such that $\al\T w$ is not PE. By our work above we have that $\cT^w(\al, \beta, T)$ is not relatively dense for any $\beta,\, T > 0$. By \ref{item:PE:R} this tells us that $\sup \cT^w(\al, \beta, T) < \infty$ for all $\beta,\, T > 0$.
    For the sake of contradiction, suppose $\al\T w \not\to 0$.
    Then for some $\beta\0 > 0$ there exists a sequence $\{\, t_k \,\}_{k = 1}^{\infty}$ such that $t_k \to \infty$ and $| \al\T w(t_k) | \geq 2 \sqrt{\beta\0} \| \al \|$. Since $w$ is uniformly piecewise continuous, so is $\al\T w$. As such, there exists $T\0 > 0$ such that for all $k \in \NN$ we have that $| \al\T w(t) | \geq \sqrt{\beta\0} \| \al \|$ for either all $t \in [t_k, t_k + T\0]$ or all $t \in [t_k - T\0, t_k]$. For every $k \in \NN$, if the previous statement only holds for $t \in [t_k - T\0, t_k]$, replace $t_k$ with $t_k - T\0$ for that $k$. In doing so, we have that
    \begin{align*}
        \al\T \left(
        \frac{1}{T\0} \int_{t_k}^{t_k + T\0} w(\tau) w\T(\tau) \,d\tau
        \right) \al
        \geq
        \beta\0 \| \al \|^2
    \end{align*}
    and so $\{\, t_k \,\}_{k = 1}^{\infty} \subseteq \cT^w(\al, \beta\0, T\0)$. Since $t_k \to \infty$, this contradicts the fact that $\sup \cT^w(\al, \beta\0, T\0) < \infty$.

    ($\impliedby$)
    Suppose $\al\T w$ is not PE implies $\al\T w \to 0$.
    If $\al\T w$ is PE then our work at the start of the proof of this theorem showed that there exist $\beta,\, T > 0$ such that $\cT^w(\al, \beta, T)$ is relatively dense; that is, \ref{item:PE:R} is satisfied.
    So it suffices to consider the case when $\al\T w$ is not PE.
    In this case, for every $T > 0$ we have
    \begin{align*}
        \lim_{t \to \infty} \al\T \left(
        \frac{1}{T} \int_{t}^{t + T} w(\tau) w\T(\tau) \,d\tau
        \right) \al
        =
        0
        \,,
    \end{align*}
    from which it follows that $\sup \cT^w(\al, \beta, T) < \infty$ for every $\beta,\, T > 0$; that is, \ref{item:PE:R} is satisfied.
\end{proof}

In light of Theorem~\ref{thm:PE:reg:R_vanish}, a natural class of regressors to consider that are void of the pathology in Section~\ref{sec:PE:reg:pathology} are those whose components that are not PE do indeed vanish.

\begin{assumption} \label{as:woPE_vanish}
    The regressor $w(t) \in \RR^q$ is bounded, uniformly piecewise continuous, and for any $\al \in \RR^q$ we have $\al\T w$ is not PE implies that $\al\T w \to 0$.
    \rqed
\end{assumption}

The following theorem summarizes our findings about the equivalence between vanishing regressors and regularity, and is proved by combining Proposition~\ref{prop:PE:woPE_vanish:reg} with Theorem~\ref{thm:PE:reg:R_vanish}.

\begin{theorem}
    Suppose $w(t) \in \RR^q$ is uniformly piecewise continuous.
    The following are equivalent:
    \begin{enumerate}[label=\alph*)]
        \item \label{itm:thrm:A1:a}
        $w$ is regular and satisfies \ref{item:PE:R};
        \item \label{itm:thrm:A1:b}
        $w$ satisfies Assumption~\ref{as:woPE_vanish}.
        \dqed
    \end{enumerate}
\end{theorem}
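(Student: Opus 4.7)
The plan is straightforward: the theorem is essentially a bookkeeping exercise, showing that the characterizations already established in Proposition~\ref{prop:PE:woPE_vanish:reg} and Theorem~\ref{thm:PE:reg:R_vanish} line up perfectly with the definition of Assumption~\ref{as:woPE_vanish}. I would split the argument into the two implications and, in each direction, simply verify that the three ingredients of Assumption~\ref{as:woPE_vanish} (boundedness, uniform piecewise continuity, and the vanishing property for non-PE directions) match the three ingredients on the other side (regularity, uniform piecewise continuity, and condition \ref{item:PE:R}), with uniform piecewise continuity carried along as a standing hypothesis.

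For the implication \ref{itm:thrm:A1:a} $\Rightarrow$ \ref{itm:thrm:A1:b}, I would assume that $w$ is regular and satisfies \ref{item:PE:R}. Boundedness comes for free from Definition~\ref{def:PE:reg}, and uniform piecewise continuity is already assumed in the theorem statement. It then remains to produce the vanishing property: since $w$ is uniformly piecewise continuous and satisfies \ref{item:PE:R}, the ($\implies$) direction of Theorem~\ref{thm:PE:reg:R_vanish} yields that $\al\T w$ not PE implies $\al\T w \to 0$ for every $\al \in \RR^q$. This is exactly Assumption~\ref{as:woPE_vanish}.

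For the implication \ref{itm:thrm:A1:b} $\Rightarrow$ \ref{itm:thrm:A1:a}, I would assume Assumption~\ref{as:woPE_vanish}. Boundedness together with the vanishing property places us in the hypotheses of Proposition~\ref{prop:PE:woPE_vanish:reg}, which immediately delivers regularity of $w$. The same vanishing property, now combined with uniform piecewise continuity, feeds into the ($\impliedby$) direction of Theorem~\ref{thm:PE:reg:R_vanish} to yield condition \ref{item:PE:R}. This completes the equivalence.

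There is essentially no obstacle here: the theorem is a packaging result, not a content result. The only point one has to be mildly careful about is that Proposition~\ref{prop:PE:woPE_vanish:reg} does not itself require uniform piecewise continuity (only boundedness), while Theorem~\ref{thm:PE:reg:R_vanish} does; both hypotheses are present under Assumption~\ref{as:woPE_vanish}, so no gap arises. The proof I envision is therefore at most a few lines: cite Proposition~\ref{prop:PE:woPE_vanish:reg} for regularity, cite Theorem~\ref{thm:PE:reg:R_vanish} for the two-way translation between \ref{item:PE:R} and the vanishing property, and observe that boundedness is either assumed directly or implied by Definition~\ref{def:PE:reg}.
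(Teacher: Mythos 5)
Your proposal is correct and follows exactly the route the paper intends: the paper proves this theorem precisely by combining Proposition~\ref{prop:PE:woPE_vanish:reg} (boundedness plus the vanishing property gives regularity) with Theorem~\ref{thm:PE:reg:R_vanish} (the equivalence of \ref{item:PE:R} and the vanishing property under uniform piecewise continuity), with boundedness supplied by Definition~\ref{def:PE:reg} in one direction and by Assumption~\ref{as:woPE_vanish} in the other. Nothing further is needed.
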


This fundamental result states that to reconcile the geometric notion of regularity in \ref{itm:thrm:A1:a} and the property that non-PE regressors are vanishing in \ref{itm:thrm:A1:b}, one must impose condition  \ref{item:PE:R}: a 
reasonable requirement to eliminate pathological behaviour.

\section{New Problems in Adaptive Control}
\label{sec:probs}

The primary goal in adaptive control is to design an adaptive law for an adaptive parameter $\psih(t) \in \RR^q$ so that an error, such as $e := w\T \psih - d = w\T (\psih - \psi)$, vanishes.
Whether this goal also achieves $\psih \to \psi$ is known to depend on the excitation of the regressor. If the regressor is not PE then the requirement of $e$ vanishing asymptotically may still allow for infinitely many possible values for $\psih$.

\begin{lemma} \label{lem:ALB:e_implies_Wpr}
    Suppose $\psih \in \RR^q$ is constant.
    Let $\cW^{\star}$ denote the non-PE set of $w$.
    If $(\psih - \psi)\T w \to 0$ then $\psih - \psi \in \cW^{\star}$.
    \dqed
\end{lemma}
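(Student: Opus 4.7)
The plan is to observe that the hypothesis and conclusion are both statements about the scalar signal produced by pairing $w$ with the constant vector $\alpha := \psih - \psi$, so the lemma should reduce to a single-line application of Proposition~\ref{prop:vanish_noPE} together with the definition of the non-PE set in \eqref{eq:PE:nonPEset}.

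More concretely, I would first fix notation by setting $\alpha := \psih - \psi \in \RR^q$, which is constant by assumption. The hypothesis then reads $\alpha\T w \to 0$. Next I would invoke Proposition~\ref{prop:vanish_noPE} applied to the scalar regressor $\alpha\T w(t) \in \RR$: since $\alpha\T w \to 0$, taking the coefficient $1 \in \RR$ in that proposition yields that $1 \cdot (\alpha\T w) = \alpha\T w$ is not PE. Finally, I would appeal directly to the defining equation \eqref{eq:PE:nonPEset} of the non-PE set, which declares that any $\alpha \in \RR^q$ with $\alpha\T w$ not PE belongs to $\cW^{\star}$; hence $\psih - \psi = \alpha \in \cW^{\star}$, as required.

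There is essentially no hard step here. The only subtlety worth a sentence of commentary is that the lemma does not require $w$ to be regular, so one should be careful to work with the raw non-PE set $\cW^{\star}$ rather than with the PE subspace $\cW$ and its orthogonal complement; in particular, one must not silently use the identity $\cW^{\star} = \cW^{\perp}$, since that identity is only guaranteed under regularity. Apart from this bookkeeping point, the argument is a two-line unwrapping of definitions combined with Proposition~\ref{prop:vanish_noPE}.
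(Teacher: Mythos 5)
Your proposal is correct and follows essentially the same route as the paper: apply Proposition~\ref{prop:vanish_noPE} to the vanishing scalar signal $(\psih - \psi)\T w$ to conclude it is not PE, then invoke the definition of the non-PE set in \eqref{eq:PE:nonPEset}. Your remark about working with $\cW^{\star}$ rather than $\cW^{\perp}$ in the absence of regularity is a sound and appropriate precaution, consistent with how the paper states the lemma.
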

\begin{proof}
    By Proposition~\ref{prop:vanish_noPE} we know $(\psih - \psi)\T w$ is not PE. By definition of $\cW^{\star}$ it follows that $\psih - \psi \in \cW^{\star}$.
\end{proof}

The significance of this simple lemma is that when regressors are not PE, additional degrees of freedom become available to an adaptive control designer to select a desirable estimate $\psih$ while still achieving $e \to 0$. Moreover, if $w$ is a regular regressor then these degrees of freedom correspond to the affine set $\psi + \cW^{\perp}$ (since $\cW^{\star} = \cW^{\perp}$ is a subspace). The converse of Lemma~\ref{lem:ALB:e_implies_Wpr} holds true if the regressor satisfies Assumption~\ref{as:woPE_vanish}. As such, we can replace the regulation requirement $e \to 0$ with the geometric constraint $\psih - \psi \in \cW^{\perp}$. By managing the degrees of freedom that arise when regressors are not PE, one can pose new problems in adaptive control.

\subsection{Prior Knowledge Retention}

The problem is to find some $\psih$ that is as close as possible to a nominal parameter value $\psi\0$ while maintaining error regulation. This nominal value may represent the designer's prior knowledge of the parameters.

\subsection{Adaptive Load Balancing}

The adaptive load balancing problem is to optimize a cost over the set $\psi + \cW^{\perp}$. When multiple regressors  are involved, a designer can trade off (e.g.) energy consumption, reliability, latency, or memory usage associated to each regressor source.

\subsection{Optimal Steady-State Regulation (OSSR)}

The OSSR problem regards a class of adaptive load balancing problems in which a cost is expressed in terms of the input components in steady-state, rather than simply the parameter estimates \cite{HAFEZ24_TAC}.

\subsection{New Approach to Robustness}

The robust adaptive control problem is to design an adaptive law for $\psih$ that achieves error regulation and remains robust to noise, regardless of whether the regressor is PE or not. 
The non-PE set exposes where non-robustness can occur, resulting in the $\mu$-modification \cite{MEJIA23_SCL, MEJIA24_TAC, MEJIA23_CDC, MEJIA23_ARC}, and thereby avoiding ad hoc methods to handle robustness.

\section{Conclusion}

This paper proposes a new notion of regularity of regressors based on the observation that the PE condition permits unreasonable behaviour.
The most significant finding is that regular regressors are amenable to a PE decomposition showing that all excitation is confined to a PE subspace. Regularity is also related to the commonsense idea that non-PE regressors should be vanishing. Finally, the paper lays the foundation to formulate new problems in adaptive control.

\bibliographystyle{ieeetr}
\bibliography{main}

\end{document}